\newcommand{\bbR}{\mathbb{R}}
\newcommand{\calK}{\mathcal{K}}
\theoremstyle{definition}
\newtheorem{theorem}{Theorem}
\newtheorem{lemma}[theorem]{Lemma}
\theoremstyle{remark}
\newtheorem{remark}{Remark}
\author{Adrian Wiltz, Dimos V. Dimarogonas}
\title{A Note on an Upper-Bound for the Sum of a Class $ \calK $ and an Extended Class $ \calK_{e} $ Function
\thanks{This work was supported by the ERC Consolidator Grant LEAFHOUND, the Horizon Europe EIC project SymAware (101070802), the Swedish Research Council, and the Knut and Alice Wallenberg Foundation, and is part of a journal paper that is in preparation.}
\thanks{The authors are with the Division of Decision and Control Systems, KTH Royal Institute of Technology, SE-100 44 Stockholm, Sweden
	{\tt\small \{wiltz,dimos\}@kth.se}.}
}
\date{}		
\renewcommand\footnotemark{}
\begin{document}
\setlength{\parskip}{0.5em}
\maketitle

In this note, we derive an upper-bound for the sum of two comparison functions, namely for the sum of an extended class~$ \calK_{e} $ function $ \alpha_{1} $ and a class~$ \calK $ function $ \alpha_{2} $. A class~$ \calK $ function is defined as a continuous, strictly increasing function $ \alpha_{2}: \bbR_{\geq0}\rightarrow\bbR_{\geq0} $ with $ \alpha(0) = 0 $. If a class $ \calK $ function is additionally defined on the entire real space $ \bbR $, then the function is called an extended class $ \calK_{e} $ function. In particular, an extended class $ \calK_{e} $ function is defined as a continuous, strictly increasing function $ \alpha_{1}: \bbR\rightarrow\bbR $ with $ \alpha(0) = 0 $.

Class~$ \calK $ and extended class~$ \calK_{e} $ functions are a particular type of the class of comparison functions. Comparison functions play an important role for example in the analysis of nonlinear dynamic systems and in nonlinear controller design~\cite{Khalil2002}. A notable collection of results on comparison functions can be found in~\cite{Kellett2014}. However, to the best of our knowledge, the relations derived in this note have not been previously derived in the literature.

This note is dedicated to the derivation of a super-additivity like upper-bound on the sum of an extended class~$ \calK_{e} $ function $ \alpha_{1} $ and a class~$ \calK $ function $ \alpha_{2} $. More precisely, we show under which conditions the super-additivity like property
\begin{align*}
	\alpha_{1}(x_{1}) + \alpha_{2}(x_{2}) \leq \beta(x_{1}+x_{2})
\end{align*}
holds where $ \beta: \bbR\rightarrow\bbR $ is an extended class~$ \calK_{e} $ function. 

Our results are stated as follows.

\begin{lemma}
	\label{lemma:time-varying CBF without input constraints 1}
	Let $ \alpha_{1}: \bbR \rightarrow \bbR $ be an extended class $ \calK_{e} $ function, and $ \alpha_{2}: \bbR_{\geq 0} \rightarrow \bbR $ a \emph{convex} class $ \calK $ function such that $ \alpha_{1}(-x)\leq -\alpha_{2}(x) $ for all $ x\in[0,A] $ and some finite $ A>0 $. Then, there exists an extended class~$ \calK_{e} $ function $ \beta $ such that for all $ x_{1} \in [-A,\infty) $, $ x_{2} \in [0,A] $ it holds
	\begin{align}
		\label{eq:time-varying CBF without input constraints 1}
		\alpha_{1}(x_{1}) + \alpha_{2}(x_{2}) \leq \beta(x_{1}+x_{2}).
	\end{align}
\end{lemma}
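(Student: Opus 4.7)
The plan is to take $\beta$ to be, in essence, the pointwise maximum of $\alpha_1(x_1)+\alpha_2(x_2)$ over all admissible decompositions $x_1+x_2=s$, and then verify that this envelope extends to an extended class $\calK_{e}$ function on $\bbR$. Concretely, for $s\in[-A,\infty)$ I would put
\begin{align*}
\beta_0(s):=\max_{x_2\in I(s)}\bigl(\alpha_1(s-x_2)+\alpha_2(x_2)\bigr),\qquad I(s):=\bigl[0,\min(A,s+A)\bigr].
\end{align*}
The set $I(s)$ is nonempty and compact on $[-A,\infty)$ and the integrand is continuous, so the maximum is attained, and inequality~\eqref{eq:time-varying CBF without input constraints 1} is then immediate for $x_1+x_2\geq-A$ by construction.

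The bulk of the work goes into showing that $\beta_0$ is continuous, strictly increasing, and vanishes at the origin. The zero condition follows from the hypothesis $\alpha_1(-x)\leq-\alpha_2(x)$ on $[0,A]$: at $s=0$, the integrand is $\leq 0$ throughout $I(0)=[0,A]$, with equality at $x_2=0$. For strict monotonicity I would exploit the nesting $I(s_1)\subseteq I(s_2)$ whenever $s_1<s_2$: any maximizer $x_2^{\star}$ attaining $\beta_0(s_1)$ remains admissible for $s_2$, so
\begin{align*}
\beta_0(s_2)\geq \alpha_1(s_2-x_2^{\star})+\alpha_2(x_2^{\star})>\alpha_1(s_1-x_2^{\star})+\alpha_2(x_2^{\star})=\beta_0(s_1),
\end{align*}
the strictness coming from $\alpha_1$ being strictly increasing. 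Continuity is the delicate point; I would derive it from Berge's maximum theorem, since the objective is jointly continuous in $(s,x_2)$ and the correspondence $s\mapsto I(s)$ is compact-valued and continuous (its right endpoint $\min(A,s+A)$ depends continuously on $s$).

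Because $\beta_0(-A)=\alpha_1(-A)<0$, extending $\beta_0$ to all of $\bbR$ is routine: one glues on any strictly increasing continuous branch below $-A$, for instance
\begin{align*}
\beta(s):=\begin{cases}\beta_0(s),& s\geq -A,\\ \alpha_1(-A)+(s+A),& s<-A.\end{cases}
\end{align*}
The resulting $\beta$ is continuous, strictly increasing on $\bbR$, satisfies $\beta(0)=0$, and hence is an extended class $\calK_{e}$ function that majorises the sum at every admissible pair $(x_1,x_2)$. The main obstacle is the continuity step for $\beta_0$; the convexity of $\alpha_2$ is not strictly needed for this envelope construction, but through the superadditivity $\alpha_2(a)+\alpha_2(b)\leq\alpha_2(a+b)$ on $\bbR_{\geq 0}$ it lets one dominate $\beta_0$ by an explicit class-$\calK_{e}$ expression, should one prefer a closed-form choice of $\beta$.
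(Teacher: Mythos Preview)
Your envelope construction is correct, and it is a genuinely different route from the paper's argument. The paper builds an explicit convex extended class $\calK_{e}$ function $\alpha_{2}'$ that agrees with $\alpha_{2}$ on $[0,A]$, and then splits into three cases ($x_{1}\in[-A,0]$ with $x_{1}+x_{2}\leq 0$, $x_{1}\in[-A,0]$ with $x_{1}+x_{2}>0$, and $x_{1}\geq 0$), in each of which convexity is used directly via superadditivity~\eqref{eq:time-varying CBF without input constraints 1 aux 0}, the shifted difference inequality~\eqref{eq:time-varying CBF without input constraints 1 aux 1}, and the bound $-\alpha'(x)\leq\alpha'(-x)$. The outcome is a closed-form piecewise $\beta$ expressed through $\alpha_{2}'$.

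Your approach instead defines $\beta_{0}(s)$ as the supremum of the sum over all admissible decompositions and verifies the class $\calK_{e}$ properties directly: the hypothesis $\alpha_{1}(-x)\leq-\alpha_{2}(x)$ pins down $\beta_{0}(0)=0$, strict monotonicity of $\alpha_{1}$ gives strict monotonicity of $\beta_{0}$, and Berge's maximum theorem handles continuity since $s\mapsto I(s)$ is a continuous compact-valued correspondence. This is cleaner and, as you observe, does not actually use the convexity of $\alpha_{2}$ at all; in fact your argument proves Lemma~\ref{lemma:time-varying CBF without input constraints 1} and Lemma~\ref{lemma:time-varying CBF without input constraints 2} simultaneously, and for arbitrary class $\calK$ functions $\alpha_{2}$. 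What the paper's approach buys in exchange is an explicit formula for $\beta$ in terms of the given data, which can be preferable in applications (e.g.\ control synthesis) where one wants to write $\beta$ down rather than merely assert its existence.
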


\begin{lemma}
	\label{lemma:time-varying CBF without input constraints 2}
	Let $ \alpha_{1}: \bbR \rightarrow \bbR $ be an extended class $ \calK_{e} $ function, and $ \alpha_{2}: \bbR_{\geq 0} \rightarrow \bbR $ a \emph{concave} class $ \calK $ function such that $ \alpha_{1}(-x)\leq -\alpha_{2}(x) $ for all $ x\in[0,A] $ and $ A>0 $. Then, there exists an extended class~$ \calK_{e} $ function $ \beta $ such that for all $ x_{1} \in [-A,\infty) $, $ x_{2} \in [0,A] $ it holds
	\begin{align}
		\label{eq:time-varying CBF without input constraints 2}
		\alpha_{1}(x_{1}) + \alpha_{2}(x_{2}) \leq \beta(x_{1}+x_{2}).
	\end{align}
	This even holds if $ A\rightarrow\infty $.
\end{lemma}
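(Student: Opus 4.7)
The plan is to build $\beta$ piecewise, separately on $[0,\infty)$ and $(-\infty,0)$, exploiting the fact that a concave class~$\calK$ function with $\alpha_2(0)=0$ has decreasing increments: for every $v\ge 0$ the map $u\mapsto \alpha_2(u+v)-\alpha_2(u)$ is non-increasing in $u\ge 0$. Evaluating at $u=0$ this yields the workhorse inequality
\[
\alpha_2(u+v)-\alpha_2(u)\;\le\;\alpha_2(v)\qquad \text{for all }u,v\ge 0,
\]
which is the concave counterpart of the super-additivity that drives the convex setting of Lemma~\ref{lemma:time-varying CBF without input constraints 1}.

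On $[0,\infty)$ I set $\beta(s):=\alpha_1(s)+\alpha_2(s)$, which is continuous, strictly increasing and vanishes at $0$. For $s=x_1+x_2\ge 0$ I split on the sign of $x_1$: if $x_1\ge 0$, then $x_1,x_2\le s$ and the claim follows from the monotonicity of $\alpha_1,\alpha_2$; if $x_1<0$, then $x_2=|x_1|+s$, and combining the hypothesis $\alpha_1(x_1)\le-\alpha_2(|x_1|)$ with the displayed bound applied at $u=|x_1|$, $v=s$ gives
\[
\alpha_1(x_1)+\alpha_2(x_2)\;\le\;-\alpha_2(|x_1|)+\alpha_2(|x_1|+s)\;\le\;\alpha_2(s)\;\le\;\beta(s),
\]
using $\alpha_1(s)\ge 0$ at the last step. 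On $(-\infty,0)$ the same hypothesis yields $\alpha_1(x_1)+\alpha_2(x_2)\le \alpha_2(x_2)-\alpha_2(x_2+|s|)$; by the same increment-monotonicity this right-hand side is non-decreasing in $x_2$, so its supremum over the feasible range $x_2\in[0,A-|s|]$ is attained at $x_2=A-|s|$ with value $\alpha_2(A-|s|)-\alpha_2(A)$. I therefore set $\beta(s):=\alpha_2(A+s)-\alpha_2(A)$ on $[-A,0]$---continuous, strictly increasing, with $\beta(0)=0$ and $\beta(-A)=-\alpha_2(A)$---and extend it to $(-\infty,-A)$ by any continuous strictly increasing function matching at $-A$ and tending to $-\infty$. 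The two halves glue continuously at $s=0$ into a global extended class~$\calK_e$ function.

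The step I expect to be the main obstacle is the claim that the result persists as $A\to\infty$. For slowly growing $\alpha_2$ (for instance sub-linear), the quantity $\alpha_2(A+s)-\alpha_2(A)$ tends to $0$ for every fixed $s<0$, so the explicit negative-side formula above degenerates and does not produce a valid strictly increasing $\beta$ in the limit. To cover this case I would redesign the negative-side piece so that $\alpha_1$ enters directly---exploiting $\alpha_1(s)\le-\alpha_2(|s|)<0$ together with the sub-additivity of $\alpha_2$ to dominate $\alpha_2(x_2)-\alpha_2(x_2+|s|)$ uniformly in $x_2$---and then re-verify monotonicity, continuity at $s=0$, and the limits at $\pm\infty$. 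The remaining glueing and extended class~$\calK_e$ checks are routine.
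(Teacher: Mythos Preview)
Your argument for finite $A$ is correct, and on the half-line $s\ge 0$ your choice $\beta(s)=\alpha_1(s)+\alpha_2(s)$ coincides exactly with the paper's. The difference lies on the negative side. The paper does not compute the supremum over admissible pairs; instead it fixes once and for all a concave extended class~$\calK_e$ function $\alpha'_2:\bbR\to\bbR$ with $\alpha'_2|_{[0,\infty)}=\alpha_2$, and then applies the decreasing-increment property with the shift $c=x_1\le 0$ to obtain directly
\[
\alpha_1(x_1)+\alpha_2(x_2)\;\le\;-\alpha'_2(-x_1)+\alpha'_2(x_2)\;\le\;-\alpha'_2(0)+\alpha'_2(x_1+x_2)\;=\;\alpha'_2(x_1+x_2),
\]
so that $\beta(s)=\alpha'_2(s)$ for $s<0$. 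Your formula $\beta(s)=\alpha_2(A+s)-\alpha_2(A)$ is the sharper of the two (indeed $\alpha_2(A+s)-\alpha_2(A)\le \alpha'_2(s)$ by the same concavity), but it carries $A$ explicitly, whereas the paper's does not.

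That single design choice is precisely what resolves the $A\to\infty$ issue you flag: since the paper's $\beta$ never depends on $A$, the limiting statement is immediate. Your proposed repair---``bring $\alpha_1$ in directly via $\alpha_1(s)\le -\alpha_2(|s|)$ and sub-additivity''---does not close the gap as stated: sub-additivity gives only the \emph{lower} bound $\alpha_2(x_2)-\alpha_2(x_2+|s|)\ge -\alpha_2(|s|)$, while what you need is a strictly negative \emph{upper} bound uniform in $x_2$, and the supremum over $x_2\ge 0$ of that difference is $0$. The missing idea is exactly the paper's device: replace the $A$-dependent envelope by a concave continuation of $\alpha_2$ into the negative reals, which furnishes an $A$-free class~$\calK_e$ majorant on $(-\infty,0)$ in one stroke.
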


\begin{remark}
	In contrast to Lemmas~\ref{lemma:time-varying CBF without input constraints 1} and~\ref{lemma:time-varying CBF without input constraints 2}, the compendium of comparison function results~\cite{Kellett2014} only reports results where $ \beta(x_{1}+x_{2}) $ constitutes a lower-bound. 
\end{remark}

In the remainder of this note, we prove these two lemmas.

\newpage

\begin{proof}[Proof of Lemma~\ref{lemma:time-varying CBF without input constraints 1}]
	Before we start with the actual proof, we recall some important properties of convex functions which the proof is based on. At first, recall that for a convex function $ \alpha' $ it holds for all $ x,y\in\bbR $ that 
	\begin{align}
		\label{eq:time-varying CBF without input constraints 1 aux 00}
		\alpha'(\sigma x \!+\! (1\!-\!\sigma)y) \!\leq\! \sigma \alpha'(x) \!+\! (1\!-\!\sigma) \alpha'(y), \quad \!\sigma\!\in\![0,1].
	\end{align}
	For $ \sigma=1 $, this implies $ \alpha'(\sigma x) \leq \sigma \alpha'(x) $. Moreover, if additionally $ \alpha'(0)=0 $ (e.g., if $ \alpha' $ is a convex class~$ \calK $ function), then 
	$ \alpha' $ is superadditive for positive real numbers; that is, for all $ x,y\geq 0 $, it holds that
	\begin{align}
		\label{eq:time-varying CBF without input constraints 1 aux 0}
		\alpha'(x)+\alpha'(y) \leq \alpha'(x+y).
	\end{align}
	This can be shown as
	\begin{align*}
		\alpha'(x) &+ \alpha'(y) 
		= \alpha'\left( (x+y) \frac{x}{x+y} \right) + \alpha'\left( (x+y) \frac{y}{x+y} \right) \\
		&\leq \frac{x}{x+y} \alpha'(x+y) + \frac{y}{x+y} \alpha'(x+y) = \alpha'(x+y)
	\end{align*}
	where the inequality results from the fact that $ \alpha'(\sigma x) \leq \sigma \alpha'(x) $ for $ \sigma\in[0,1] $.
	
	Furthermore, we recall that the difference quotient $ D_{\alpha'}(x,y) := \frac{\alpha'(y)-\alpha'(x)}{y-x} $ of the convex function $ \alpha' $ is monotonously increasing in both of its arguments\footnote{This is a standard result and it can be easily shown as follows. At first, let $ x $ be fixed, and choose $ y = \sigma y' + (1-\sigma)x  $ where $ y'>x $; thus, $ y \leq y' $. Then, it follows that $ D(x,y) $ is monotonously increasing in~$ y $ as 
		\begin{align*}
			&D(x,y) = \frac{\alpha'(y)-\alpha'(x)}{y-x} = \frac{\alpha'(\sigma y' + (1-\sigma)x) - \alpha'(x)}{\sigma y' + (1-\sigma)x - x}  \\
			&\quad= \frac{\alpha'(\sigma y' + (1-\sigma)x) - \alpha'(x)}{\sigma y' - \sigma x}\stackrel{\eqref{eq:time-varying CBF without input constraints 1 aux 00}}{\leq} \frac{(1-\sigma)\alpha'(x) + \sigma \alpha'(y') - \alpha'(x)}{\sigma y' - \sigma x} \\
			&\quad= \frac{\alpha'(y')-\alpha'(x)}{y'-x} = D(x,y').
		\end{align*}
		The result follows for the first argument analogously. 
	}. Thus, we have $ \frac{\alpha'(y)-\alpha'(x)}{y-x} \leq \frac{\alpha'(y+c)-\alpha'(x+c)}{y-x} $ for all $ x<y $, $ c\geq 0 $, or equivalently,
	\begin{align}
		\label{eq:time-varying CBF without input constraints 1 aux 1}
		\alpha'(y)-\alpha'(x) \leq \alpha'(y+c)-\alpha'(x+c).
	\end{align}
	
	At last in addition to its convexity, we assume that $ \alpha' $ is an extended class~$ \calK_{e} $ function. Then, it holds for all $ \sigma \in [0,1/2] $, $ x\geq 0 $, that $ 0 \leq \alpha'((1-2\sigma)x) \leq \sigma \alpha'(-x)  + (1-\sigma) \alpha'(x) $ where the first inequality follows as $ \alpha' $ is class~$ \calK_{e} $ and thus it is non-negative for non-negative arguments; the second inequality follows from the convexity of $ \alpha' $. By rearranging terms, we obtain $ -\alpha'(x)\leq \sigma (\alpha'(-x)-\alpha'(x)) $, and thus for $ \sigma = 1/2 $ that for all $ x\geq 0 $ it holds
	\begin{align}
		\label{eq:time-varying CBF without input constraints 1 aux 2}
		-\alpha'(x)\leq\alpha'(-x).
	\end{align} 
	
	Now, we turn towards the actual proof of \eqref{eq:time-varying CBF without input constraints 1}. Therefore, let us define an extended version of $ \alpha_{2} $ as a convex extended class $ \calK_{e} $ function $ \alpha'_{2}: \bbR\rightarrow\bbR $ such that: (1) $ \alpha'_{2}(x) $ is an arbitrary continuous, convex and monotonously increasing continuation of $ \alpha_{2}(x) $ for all $ x<0 $; (2) $ \alpha'_{2}(x) = \alpha_{2}(x) $ for all $ x\in[0,A] $; and (3) $ \alpha'_{2}(x) = \alpha_{2}(A) + \alpha'_{1}(x-A) $ for all $ x\geq A $ where $ \alpha'_{1}(x) $ is some convex class~$ \calK $ function with $ \alpha'_{1}(x)\geq\alpha_{1}(x) $ for all $ x\geq0 $. 
	Next, we distinguish three cases, namely $ x_{1} \in [-A,0] $ with $ x_{1}+x_{2}\leq 0 $ (case~1a) and $ x_{1}+x_{2}\geq 0 $ (case~1b), and $ x_{1} \in [0,\infty) $ (case~2). Recall that $ x_{2}\in[0,A] $.
	
	\emph{Case~1a ($ x_{1} \in [-A,0] $ and $ x_{1}+x_{2}\leq0 $):} At first we note that since $ \alpha_{2} $ is convex it holds 
	\begin{align}
		\label{eq:time-varying CBF without input constraints 1 aux 3}
		\alpha_{2}(-(x_{1}+x_{2})) + \alpha_{2}(x_{2}) \stackrel{\eqref{eq:time-varying CBF without input constraints 1 aux 0}}{\leq} \alpha_{2}(-x_{1})
	\end{align}
	due to the superadditivity of $ \alpha_{2} $. Next, we consider the left-hand side of~\eqref{eq:time-varying CBF without input constraints 1}. By employing that $ \alpha_{1}(-x)\leq -\alpha_{2}(x) $ for all $ x\in[0,A] $ and that $ \alpha'_{2} $ is convex, we obtain
	\begin{align*}
		\alpha_{1}(x_{1}) \!+\! \alpha_{2}(x_{2}) \!&\leq\! -\alpha_{2}(-x_{1}) \!+\! \alpha_{2}(x_{2}) \!\stackrel{\eqref{eq:time-varying CBF without input constraints 1 aux 3}}{\leq}\! -\alpha_{2}(-(x_{1}\!+\!x_{2})) \\
		&= -\alpha'_{2}(-(x_{1}+x_{2})) \stackrel{\eqref{eq:time-varying CBF without input constraints 1 aux 2}}{\leq} \alpha'_{2}(x_{1}+x_{2}).
	\end{align*}
	Here we employed that $ -(x_{1}+x_{2})\in[0,A] $ in case~1a.
	
	\emph{Case~1b ($ x_{1} \in [-A,0] $ and $ x_{1}+x_{2}>0 $):} Noting that $ x_{1} + A \geq 0 $, we derive by starting again with the left-hand side of~\eqref{eq:time-varying CBF without input constraints 1} that
	\begin{align*}
		\alpha_{1}(x_{1}) \!\!+\!\! \alpha_{2}(x_{2}) \!&\leq\! -\alpha_{2}(-x_{1}) \!\!+\!\! \alpha_{2}(x_{2}) \!=\!  -\alpha'_{2}(-x_{1}) \!\!+\!\! \alpha'_{2}(x_{2}) \\ 
		\!&\stackrel{\eqref{eq:time-varying CBF without input constraints 1 aux 1}}{\leq}\! -\alpha'_{2}(-x_{1}\!+\!x_{1}\!+\!A) \!+\! \alpha'_{2}(x_{1}\!+\!x_{2}\!+\!A) \\
		&= -\alpha'_{2}(A) \!+\! \alpha'_{2}(x_{1}\!+\!x_{2}\!+\!A).
	\end{align*}
	
	\emph{Case~2 ($ x_{1} \in [0,\infty) $):} Recall that $ \alpha'_{2}(x) = \alpha_{2}(A) + \alpha'_{1}(x-A) $ for $ x\geq A $ with $ \alpha'_{1}(x)\geq\alpha_{1}(x) $ for all $ x\geq0 $. Thus, we have for $ x\geq 0 $ that
	\begin{align}
		\label{eq:time-varying CBF without input constraints 1 aux 4}
		\alpha_{1}(x) \leq \alpha'_{1}(x) = \alpha'_{2}(x+A) - \alpha_{2}(A).
	\end{align} 
	Furthermore by employing that $ \alpha'_{2} $ is convex, we obtain
	\begin{align*}
		\alpha_{1}(x_{1}) \!+\! \alpha_{2}(x_{2}) \!&\stackrel{\eqref{eq:time-varying CBF without input constraints 1 aux 4}}{\leq}\! - \alpha_{2}(A) \!+\! \alpha'_{2}(x\!+\!A) \!+\! \alpha_{2}(x_{2}) \\ 
		&= \alpha'_{2}(x\!+\!A) \!-\! \alpha'_{2}(A) \!+\! \alpha'_{2}(x_{2}) \\ &\stackrel{\text{\eqref{eq:time-varying CBF without input constraints 1 aux 0}}}{\leq} -\alpha'_{2}(A) \!+\! \alpha'_{2}(A\!+\!x_{1}\!+\!x_{2}).
	\end{align*}
	
	Summarizing cases~1a, 1b and~2, we choose $ \beta $ in~\eqref{eq:time-varying CBF without input constraints 1} as 
	\begin{align*}
		\beta(x_{1}\!+\!x_{2}) = \begin{cases}
			\alpha'_{2}(x_{1}\!+\!x_{2}) &\text{if } x_{1}\!+\!x_{2}\leq0,  \\
			-\alpha'_{2}(A)\!+\!\alpha'_{2}(x_{1}\!+\!x_{2}\!+\!A) &\text{if } x_{1} \!+\! x_{2}> 0.
		\end{cases}
	\end{align*}
	As shown, $ \beta $ satisfies~\eqref{eq:time-varying CBF without input constraints 1}. Moreover, $ \beta $ is continuous, it holds $ \beta(0) = 0 $, and $ \beta $ is monotonously increasing as $ \alpha'_{2} $ is class $ \calK_{e} $; thus, also $ \beta $ is class~$ \calK_{e} $. This concludes the proof.
\end{proof}

\begin{proof}[Proof of Lemma~\ref{lemma:time-varying CBF without input constraints 2}]
	The proof in the case of a concave function $ \alpha_{2} $ is more straightforward compared to the convex case. Before we start, we recall that the difference quotient $ D_{\alpha'}(x,y) := \frac{\alpha'(y)-\alpha'(x)}{y-x} $ of a concave function $ \alpha' $ is monotonously decreasing in both arguments. Thus, \eqref{eq:time-varying CBF without input constraints 1 aux 1} still holds, however only for non-positive $ c\leq 0 $. More precisely, it holds for all $ x<y $, $ c\leq 0 $ that
	\begin{align}
		\label{eq:time-varying CBF without input constraints 2 aux 1}
		\alpha'(y)-\alpha'(x) \leq \alpha'(y+c)-\alpha'(x+c).
	\end{align}
	
	Now, we turn towards the actual proof of~\eqref{eq:time-varying CBF without input constraints 2}. To this end, we define again an extended version of $ \alpha_{2} $, however this time as a concave extended class $ \calK_{e} $ function $ \alpha'_{2}:\bbR\rightarrow\bbR $ such that $ \alpha'_{2}(x)=\alpha_{2}(x) $ for all $ x\geq 0 $. Next, we distinguish two cases, namely $ x_{1}\in[-A,0] $ (case~1) and $ x_{1}\in[0,\infty) $ (case~2). Recall that $ x_{2}\in[0,A] $.
	
	\emph{Case 1 ($ x_{1}\in[-A,0] $):} Consider the left-hand side of~\eqref{eq:time-varying CBF without input constraints 2}. By employing that $ \alpha_{1}(-x)\leq -\alpha_{2}(x) $ for all $ x\in[0,A] $ and that $ \alpha'_{2} $ is concave, we obtain
	\begin{align*}
		\alpha_{1}(x_{1}) \!&+\! \alpha_{2}(x_{2}) \!\leq\! -\alpha_{2}(-x_{1}) \!\!+\!\! \alpha_{2}(x_{2}) \!=\! -\alpha'_{2}(-x_{1}) \!\!+\!\! \alpha'_{2}(x_{2}) \\
		&\stackrel{\eqref{eq:time-varying CBF without input constraints 2 aux 1}}{\leq} -\alpha'_{2}(-x_{1} \!+\! x_{1}) \!+\! \alpha'_{2}(x_{1}\!+\!x_{2}) = \alpha'_{2}(x_{1}\!+\!x_{2})
	\end{align*}
	where the last inequality is obtained by adding $ c=x_{1} $, which is non-positive by assumption, to the arguments of $ \alpha'_{2} $. For the case that $ x_{1} + x_{2}\geq 0 $, we note that the right-hand side is upper-bounded by
	\begin{align*}
		\alpha'_{2}(x_{1}+x_{2}) \leq \alpha_{1}(x_{1}+x_{2})+\alpha'_{2}(x_{1}+x_{2}).
	\end{align*}
	This observation is needed later on for the construction of the extended class~$ \calK_{e} $ function~$ \beta $.
	
	\emph{Case 2 ($ x_{1}\in[0,\infty) $):}  In this case, it always holds that $ x_{1}+x_{2}\geq 0 $. Thus, we directly obtain 
	\begin{align*}
		\alpha_{1}(x_{1}) \!+\! \alpha_{2}(x_{2}) &= \alpha_{1}(x_{1}) \!+\! \alpha'_{2}(x_{2}) \\
		&\leq \alpha_{1}(x_{1}\!+\!x_{2}) \!+\! \alpha'_{2}(x_{1}\!+\!x_{2}).
	\end{align*}
	
	Summarizing cases~1 and~2, we choose $ \beta $ in~\eqref{eq:time-varying CBF without input constraints 2} as
	\begin{align*}
		\beta(x_{1}\!+\!x_{2}) = 
		\begin{cases}
			\alpha'_{2}(x_{1}\!+\!x_{2}) &\text{if } x_{1}\!+\!x_{2}< 0, \\
			\alpha_{1}(x_{1}\!+\!x_{2}) \!+\! \alpha'_{2}(x_{1}\!+\!x_{2}) &\text{if } x_{1}\!+\!x_{2}\geq 0.
		\end{cases}
	\end{align*}
	As shown, $ \beta $ satisfies~\eqref{eq:time-varying CBF without input constraints 2}. Moreover, $ \beta $ is continuous, it holds $ \beta(0) = 0 $, and $ \beta $ is monotonously increasing as both $ \alpha_{1} $ and $ \alpha_{2}' $ are class $ \calK_{e} $ functions; thus, also $ \beta $ is a class~$ \calK_{e} $ function. This result even holds for $ A\rightarrow\infty $, as the construction of $ \beta $ does not rely on $ A $. This concludes the proof.
\end{proof}


\bibliographystyle{IEEEtrans}
\bibliography{/Users/wiltz/CloudStation/JabBib/Research/000_MyLibrary}

\end{document}